\definecolor{darkgreen}{rgb}{0,0.4,0}
\definecolor{darkgreen}{rgb}{0,0.6,0}
\newtheorem{definition}{Definition}
\newtheorem{assumption}{Assumption}
\newtheorem{remark}{Remark}
\newtheorem{corollary}{Corollary}
\newtheorem{proposition}{Proposition}
\def\begquo{\begin{quote}}
\def\endquo{\end{quote}}
\def\begequarr{\begin{eqnarray}}
\def\endequarr{\end{eqnarray}}
\def\begequarrs{\begin{eqnarray*}}
\def\endequarrs{\end{eqnarray*}}
\def\begarr{\begin{array}}
\def\endarr{\end{array}}
\def\begequ{\begin{equation}}
\def\endequ{\end{equation}}
\def\lab{\label}
\def\begdes{\begin{description}}
\def\enddes{\end{description}}
\def\begenu{\begin{enumerate}}
\def\begite{\begin{itemize}}
\def\endite{\end{itemize}}
\def\endenu{\end{enumerate}}
\def\lef[{\left[\begin{array}}
\def\rig]{\end{array}\right]}
\def\begcen{\begin{center}}
\def\endcen{\end{center}}
\def\begrem{\begin{remark}\rm}
\def\endrem{\end{remark}}
\def\begdef{\begin{definition}}
\def\enddef{\end{definition}}
\def\begpro{\begin{proposition}}
\def\endpro{\end{proposition}}
\def\begfac{\begin{fact}}
\def\endfac{\end{fact}}
\def\begass{\begin{assumption}}
\def\endass{\end{assumption}}
\def\begsubequ{\begin{subequations}}
\def\endsubequ{\end{subequations}}
\def\begmat#1{\begin{bmatrix}#1\end{bmatrix}}
\def\begalis#1{\begin{align*}{#1}\end{align*}}
\def\caly{{\cal Y}}
\def\calh{{\cal H}}
\def\call{{\cal L}}
\def\hatthe{\hat{\theta}}
\def\dothatthe{\dot{\hat{\theta}}}
\def\liminf{\lim_{t \to \infty}}
\def\L2e{{\cal L}_{2e}}
\def\rea{\mathbb{R}}
\def\adj{\mbox{adj}}
\def\hal{{1 \over 2}}
\def\trace{\mbox{trace}}
\def\TAC{{\it IEEE Trans. Automatic Control}}
\def\AUT{{\it Automatica}}
\def\BibTeX{{\rm B\kern-.05em{\sc i\kern-.025em b}\kern-.08em
    T\kern-.1667em\lower.7ex\hbox{E}\kern-.125emX}}
\begin{document}
\title{
Generation of New Exciting Regressors for Consistent On-line Estimation of Unknown Constant Parameters
}

\author{
Alexey Bobtsov, \IEEEmembership{Senior Member, IEEE}, Bowen Yi, Romeo Ortega, \IEEEmembership{Life Fellow, IEEE}

Alessandro Astolfi, \IEEEmembership{Fellow, IEEE}%
\thanks{This work has been partially supported by the European Union's Horizon 2020 Research and Innovation Programme under grant agreement No. 739551 (KIOS CoE), the Italian Ministry for Research in the framework of the 2017 Program for Research Projects of National Interest (PRIN), Grant no. 2017YKXYXJ, the Open Research Project of the State Key Laboratory of Industrial Control Technology, Zhejiang University, China (No. ICT2022B68), and the Russian Science Foundation, project no. 18-19-00627, https://rscf.ru/project/18-19-00627. \emph{(Corresponding author: Bowen Yi)}}

\thanks{A. Bobtsov is with Hangzhou Dianzi University (HDU), 310018, Hangzhou, China and the Faculty of Control Systems and Robotics, ITMO University, Kronverkskiy av. 49, St. Petersburg, 197101, Russia ({\tt bobtsov@mail.ru})}
\thanks{B. Yi is with Australian Centre for Field Robotics, The University of Sydney, Sydney, NSW 2006, Australia. ({\tt bowen.yi@sydney.edu.au})}%
\thanks{R. Ortega is with Departamento Acad\'emico de Sistemas Digitales, ITAM, Ciudad de M\'exico, M\'exico ({\tt romeo.ortega@itam.mx})}
\thanks{A. Astolfi is with the Department of Electrical and Electronic Engineering, Imperial College London, London SW7 2AZ, UK and the DICII,
Universita di Roma ``Tor Vergata'', Via del Politecnico 1, 00133 Roma, Italy ({\tt a.astolfi@ic.ac.uk})}
}

\maketitle

\begin{abstract}
The problem of estimating constant parameters from a standard vector linear regression equation in the absence of sufficient excitation in the regressor is addressed. The first step to solve the problem consists in transforming this equation into a set of scalar ones using the well-known dynamic regressor extension and mixing technique. Then a novel procedure to generate new scalar exciting regressors is proposed. The superior performance of a classical gradient estimator using this new regressor, instead of the original one, is illustrated with comprehensive simulations.
\end{abstract}

\begin{IEEEkeywords}\small
parameter estimation, system identification, persistent excitation
\end{IEEEkeywords}

\section{Introduction}
\label{sec1}
%
It is well known that consistent estimation of constant parameters from a linear regression equation (LRE) with a gradient (or least squares) estimator is possible only if the regressor satisfies certain excitation conditions. A classical result shows that exponential convergence is possible if and only if the regressor verifies the {\em persistence of excitation} (PE) condition \cite{SASBODbook}---which is a uniform observability property for the associated linear time-varying (LTV) system. It has recently been shown in \cite{BARORT,PRA} that asymptotic (but not exponential) convergence is guaranteed under the strictly weaker condition of generalized PE, the definition of which may be found in \cite[Proposition 6]{BARORT}.

Unfortunately, the PE (or the generalized PE) property is rarely satisfied in applications, hence the interest to propose new adaptation algorithms that ensure parameter convergence without PE. This research line has been intensively pursued in the last few years and some recent adaptive schemes, in which the PE assumption is obviated via the incorporation of some {\em off-line} data manipulation, have been reported in \cite{CHOetal,PANYU,CHOSHI,ROYetal}---see also \cite{ORTNIKGER} for a recent survey.

In this paper we are interested in {\em on-line} estimation using recursive algorithms. It is well known that, in contrast with off-line estimation schemes, on-line estimation provides, via the accumulation of past measurements and noise averaging, stronger robustness properties. Moreover, if the adaptation gain of the estimator remains bounded away from zero---a property usually referred as {\em alertness}\footnote{It is well known \cite[Section 2.3.2]{SASBODbook} that, due to the so-called covariance wind-up problem, the alertness property is lost in standard least-squares estimators. Therefore, we concentrate on gradient-descent schemes.}---it has the ability of tracking slowly time-varying parameters. We concentrate our attention to the case of a {\em single} uncertain parameter. Our main motivation to study the scalar case stems from the recent development of the {\em dynamic regressor extension and mixing} (DREM) estimator \cite{ARAetal}, which is a procedure that generates, from a $q$-dimensional LRE, $q$ scalar LREs for each of the unknown parameters. It has been observed in several applications that the absence of excitation stymies the successful use of DREM. For instance, in \cite{araetalijacsp19} it is shown that consistent estimation of the parameters of a linear time-invariant (LTI) system with DREM is possible if and only if the original regressor is PE. Actually, since the key scalar function that defines the convergence properties of the gradient estimator in DREM is the determinant of the extended regressor, this converges in many cases {\em to zero}, hence we only have excitation on a finite interval.

Our main contribution is to propose a procedure to generate, from a scalar LRE, a new scalar LRE in which the {\em new regressor} satisfies the excitation property of non-square integrability, even in the case in which the original regressor is not sufficiently exciting---for instance an exponentially decaying signal. It is shown in \cite{ARAetal} that non-square integrability of the (scalar) regressor is necessary and sufficient to ensure asymptotic convergence, which becomes exponential imposing the PE condition. Instrumental for the construction of the new LREs, that include some free signals, is to borrow the key idea of the {\em parameter estimation based observer} (PEBO) proposed in \cite{ORTetalscl}, later generalized in \cite{ORTetalaut}, which is a constructive procedure to design state observers for state-affine nonlinear systems. Then, applying the energy pumping-and-damping injection principle of \cite{YIetal}, which was proposed as a passivity-based orbital stabilization technique for port-Hamiltonian systems, we select these signals to guarantee the desired excitation properties of the new regressor.

The remainder of the paper is organized as follows. In Section \ref{sec2} we briefly recall the DREM procedure and review the problem of parameter estimation for a scalar LRE. Section \ref{sec3} presents the procedure to generate the new LRE with some free signals, which are selected in Section \ref{sec4} to comply with the excitation injection requirement. Simulation results, which illustrate the {\em superior performance} of the classical gradient estimator using the new regressor, instead of the original one, are presented in Section \ref{sec5}. The paper is wrapped-up with concluding remarks and a discussion on future research in Section \ref{sec6}.\\

\noindent {\bf Notation.} $I_n$ is the $n \times n$ identity matrix. For a vector $x \in \rea^n$ we denote the Euclidean norm as $|x|:=\sqrt {x^\top x}$. $\call_1$ and $\call_2$ denote the absolute integrable and square integrable function spaces, respectively, and $\call_\infty$ represents the vector space of essentially bounded functions.

%
\section{Gradient Estimation of a Single Parameter}
\label{sec2}
%
In this paper we deal with the problem of on-line estimation of the unknown {\em constant} parameters $\theta \in \rea^q$ appearing in an LRE of the form
\begequ
\lab{veclre}
w=\psi^\top\theta,
\endequ
where $w(t) \in \rea$ and $\psi(t) \in \rea^q$ are {\em measurable} signals. This problem appears in several applications including system identification  \cite{LJUbook} and adaptive control \cite{IOASUNbook,NARANNbook,SASBODbook} in which, as discussed in Section \ref{sec1}, a key requirement to achieve their objectives is that the regressor $\psi$ is PE, a condition that is rarely satisfied in practice. Our task is then to generate new LREs that satisfy the PE requirement.
\subsection{Generation of scalar LRE}
\label{subsec21}
%

The first step in our design is to apply the DREM procedure \cite{ARAetal} to obtain  $q$ {\em scalar} LREs, one for each of the unknown parameters. Towards this end, we introduce a {\em linear, single-input $q$-output, bounded-input bounded-output (BIBO)--stable} operator $\calh$ and define the vector $W \in \rea^m$ and the matrix $\Psi \in \rea^{q \times q}$ as
$$
W  := \calh[w],\;\Psi :=\calh[\psi^\top].
$$
Clearly, because of linearity and BIBO stability, these signals satisfy
$$
W = \Psi \theta.
$$
At this point the key step of regressor ``mixing" of the DREM procedure is used to obtain a set of $q$ {\em scalar} equations multiplying from the left the vector equation above by the {\em adjunct matrix}, denoted $\adj\{\cdot\}$, of $\Psi$ to get
\begequ\label{lre}
y_i = \Delta \theta_i,\;i \in \bar q:=\{1,2,\dots,q\},
\endequ
in which we have defined
\begalis{
\Delta & :=\det \{\Psi\},\\
y &:= \adj\{\Psi\} W.
}
This fundamental modification has numerous advantages and DREM has been instrumental to solve many, previously open, problems---see \cite{ORTNIKGER,ORTetaltac,ORTetalaut} for a recent account of some of these results.
\subsection{Properties of the gradient estimator}
\label{subsec22}
%
Motivated by the developments above, in the remaining part of the paper we consider scalar LREs of the form \eqref{lre}. The following property of the gradient estimator is easy to establish \cite{ARAetal}.

\begin{proposition}
\lab{pro1}
Consider the scalar LREs \eqref{lre} with\footnote{To simplify the notation we omit the subindex $(\cdot)_i$.} $y(t) \in \rea$ and $\Delta(t) \in \rea$ bounded, measurable, signals and $\theta \in \rea$ an {\em unknown} parameter estimated on-line via the gradient descent adaptation algorithm
\begequ
\lab{graest}
\dothatthe=-\gamma \Delta(\Delta \hatthe - y),
\endequ
with $\gamma>0$ the adaptation gain.
\begin{enumerate}
\item[{\bf P1}] The following equivalence holds true:
$$
\liminf |\hat \theta(t)-\theta|=0~~\iff~~\Delta \notin \call_2.
$$
\item[{\bf P2}] The convergence of the estimate $\hat \theta$ to $\theta$ is exponential if  and only if $\Delta$ is PE, that is, if there exist $T>0$ and $\delta>0$ such that
$$
\int_t^{t+T}\Delta^2(\tau) d\tau \geq \delta,\;\forall t\geq 0.
$$
\end{enumerate}
\end{proposition}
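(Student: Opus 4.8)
The plan is to reduce the entire statement to the explicit solution of the scalar error equation, since both \textbf{P1} and \textbf{P2} can then be read off from a single exponential formula. First I would introduce the parameter error $\tilthe:=\hatthe-\theta$ and substitute the LRE $y=\Delta\theta$ into the adaptation law \eqref{graest}. Because $\Delta\hatthe-y=\Delta(\hatthe-\theta)=\Delta\tilthe$, the estimator collapses to the scalar, linear, time-varying error dynamics
\begequ
\lab{errdyn}
\dottilthe=-\gamma\Delta^2(t)\,\tilthe .
\endequ
This is a first-order linear ODE with a sign-definite coefficient, so it integrates explicitly to
\begequ
\lab{errsol}
\tilthe(t)=\exp\!\Big(-\gamma\int_{t_0}^{t}\Delta^2(\tau)\,d\tau\Big)\tilthe(t_0),\quad t\ge t_0\ge 0 .
\endequ
All of the analysis hinges on controlling the accumulated excitation $\int_{t_0}^{t}\Delta^2\,d\tau$ appearing in \eqref{errsol}.

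For \textbf{P1} I would set $t_0=0$ and observe from \eqref{errsol} that, for any $\tilthe(0)\neq 0$, one has $|\tilthe(t)|\to 0$ if and only if $\int_0^{t}\Delta^2(\tau)\,d\tau\to\infty$ as $t\to\infty$. Since $\Delta^2\ge 0$, the map $t\mapsto\int_0^t\Delta^2\,d\tau$ is nondecreasing and hence either diverges or converges to $\int_0^\infty\Delta^2\,d\tau$, so divergence of the integral is exactly the statement $\Delta\notin\call_2$. The direction $(\Leftarrow)$ is then immediate. For $(\Rightarrow)$ I would argue by contraposition: if $\Delta\in\call_2$, writing $c:=\int_0^\infty\Delta^2\,d\tau<\infty$, formula \eqref{errsol} gives $|\tilthe(t)|\to e^{-\gamma c}|\tilthe(0)|>0$, so convergence fails. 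The only caveat, worth a one-line remark, is that the equivalence is understood for nonzero initial error, the case $\tilthe(0)=0$ being trivially convergent.

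For \textbf{P2} the same formula does the work, but now the uniformity in $t_0$ of \eqref{errsol} becomes essential, and I would read ``exponential convergence'' as uniform exponential stability of \eqref{errdyn}. For $(\Leftarrow)$, assuming $\Delta$ is PE with constants $T,\delta$, I would partition $[t_0,t]$ into consecutive windows of length $T$ and sum the lower bound on each, obtaining $\int_{t_0}^{t}\Delta^2\,d\tau\ge\delta\big(\tfrac{t-t_0}{T}-1\big)$; inserting this into \eqref{errsol} yields the uniform estimate
\begequ
\lab{expbnd}
|\tilthe(t)|\le e^{\gamma\delta}\,e^{-(\gamma\delta/T)(t-t_0)}\,|\tilthe(t_0)|,
\endequ
that is, exponential convergence.

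The converse $(\Rightarrow)$ is the only delicate point, and I expect it to be the main obstacle. I would start from the definition of uniform exponential stability, namely the existence of constants $M\ge 1$ and $\alpha>0$ with $|\tilthe(t)|\le M e^{-\alpha(t-t_0)}|\tilthe(t_0)|$ for all $t\ge t_0\ge 0$. Comparing this with \eqref{errsol} and taking logarithms gives the scalar inequality $\gamma\int_{t_0}^{t}\Delta^2\,d\tau\ge\alpha(t-t_0)-\ln M$, uniformly in $t_0$. Choosing any fixed window length $T$ with $\alpha T>\ln M$ and setting $t=t_0+T$ then produces $\int_{t_0}^{t_0+T}\Delta^2\,d\tau\ge(\alpha T-\ln M)/\gamma=:\delta>0$ for all $t_0\ge 0$, which is precisely the PE condition. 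The crucial subtlety is that PE does \emph{not} follow from mere linear growth of $\int_0^{t}\Delta^2\,d\tau$ with $t_0=0$ fixed; it is the uniformity over the initial time $t_0$---built into the notion of exponential convergence for the time-varying system \eqref{errdyn}---that upgrades the linear lower bound into the window-wise lower bound defining PE.
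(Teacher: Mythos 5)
Your proof is correct. The paper itself states this proposition without proof, deferring to \cite{ARAetal}, and your argument---reducing everything to the explicit solution $\tilde\theta(t)=\exp\big(-\gamma\int_{t_0}^{t}\Delta^2\big)\tilde\theta(t_0)$ and reading off {\bf P1} from monotone divergence of the integral and {\bf P2} from window-wise lower bounds---is exactly the standard argument used there; in particular you correctly isolate the one genuine subtlety, namely that the converse of {\bf P2} needs uniformity in $t_0$ (linear growth of $\int_0^t\Delta^2$ alone does not imply PE for bounded $\Delta$).
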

%
\section{Generation of the New LRE}
\label{sec3}
%

In this section, in order to generate a new LRE, we apply some filters to both sides of the original LRE \eqref{lre} with some free terms to comply with the excitation requirement. Then, in Section \ref{sec4} we study how to design these terms to fulfill this task.

In the sequel we apply the construction used in {generalized parameter estimation based observer} (GPEBO) \cite{ORTetalaut} to create a new LRE from the scalar LRE \eqref{lre}.
\begin{proposition}
\lab{pro2}
Consider the scalar LRE \eqref{lre}. There exists a measurable signal $\caly_2(t) \in \rea$ such that the new LRE
\begequ\label{newlre}
\caly_2  = \Phi_{21} \theta
\endequ
holds, in which the {\em new regressor} $\Phi_{21}(t) \in \rea$ is obtained from the solution of the ordinary differential equation
\begequ
\lab{regdyn}
\begmat{\dot\Phi_{11} \\ \dot\Phi_{21}} =
\begmat{0 & u_1\\ u_2 \Delta & u_3}
\begmat{\Phi_{11} \\ \Phi_{21}},
\endequ
with initial conditions
\begequ
\label{ic}
\begmat{\Phi_{11}(0) \\ \Phi_{21}(0)}=\begmat{1 \\ 0},
\endequ
and {\em arbitrary} bounded signals $u_i(t) \in \rea,$ $i=1,2,3$.
\end{proposition}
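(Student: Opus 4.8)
The plan is to mimic the GPEBO idea borrowed from \cite{ORTetalaut}: introduce an auxiliary, \emph{unmeasurable} signal built from the unknown parameter, whose dynamics—thanks to the original LRE \eqref{lre}—close on measurable quantities with a \emph{known} initial condition, so that it can be reconstructed \emph{exactly} by integrating an on-line copy of those dynamics. The construction should work for any choice of the bounded signals $u_i$, since their specific selection (to enforce excitation) is the concern of Section \ref{sec4} and plays no role in establishing the identity \eqref{newlre}.

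First I would write \eqref{regdyn} componentwise as $\dot\Phi_{11}=u_1\Phi_{21}$ and $\dot\Phi_{21}=u_2\Delta\Phi_{11}+u_3\Phi_{21}$, and observe that the pair $(\Phi_{11},\Phi_{21})$ is generated entirely from the measurable signals $\Delta,u_1,u_2,u_3$ together with the known initial condition \eqref{ic}; in particular both components are available on-line and are \emph{independent} of $\theta$. Next I would define the scalar signal $z:=\Phi_{21}\theta$, which is not measurable because $\theta$ is unknown.

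Since $\theta$ is constant, $\dot z=\dot\Phi_{21}\,\theta=(u_2\Delta\Phi_{11}+u_3\Phi_{21})\theta=u_2(\Delta\theta)\Phi_{11}+u_3(\Phi_{21}\theta)$. The decisive step is to invoke the original LRE \eqref{lre}, namely $y=\Delta\theta$, to replace the unmeasurable product $\Delta\theta$ by the measurable signal $y$, which yields the closed, measurable-coefficient dynamics $\dot z=u_2 y\,\Phi_{11}+u_3 z$. Moreover, \eqref{ic} gives $z(0)=\Phi_{21}(0)\theta=0$, a \emph{known} value. I would then define $\caly_2$ as the solution of the copy system $\dot\caly_2=u_2 y\,\Phi_{11}+u_3\caly_2$ with $\caly_2(0)=0$, driven only by the measurable signals $y,\Phi_{11},u_2,u_3$; boundedness and measurability of $u_i$ and $\Delta$ make this scalar linear ODE well posed, so $\caly_2$ is a legitimate measurable signal. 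Because $z$ and $\caly_2$ obey the \emph{same} linear scalar ODE with the \emph{same} initial condition, uniqueness of solutions forces $\caly_2(t)\equiv z(t)=\Phi_{21}(t)\theta$, which is exactly \eqref{newlre}.

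I do not expect a genuine analytical obstacle here, as the final argument is just uniqueness for a linear ODE. The only truly nontrivial point—and the place where the PEBO/GPEBO insight enters—is the conceptual choice of the lifted variable $z=\Phi_{21}\theta$ together with the recognition that multiplying the $\Phi_{21}$-equation by the constant $\theta$ converts the offending, $\theta$-dependent term $u_2\Delta\theta$ into the measurable $u_2 y$, while the zero initial condition on $\Phi_{21}$ conveniently renders $z(0)$ known. Once this substitution is spotted, everything reduces to routine integration of a linear scalar equation.
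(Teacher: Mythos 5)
Your proof is correct, but it reaches \eqref{newlre} by a different and more economical route than the paper. The paper follows the full GPEBO template: it first builds a two-dimensional \emph{virtual} LTV system $\dot x = A(t)x+b(t)$ with state $x=(\theta,z)$, where $z$ is a separately implemented measurable filter $\dot z = u_2 y+u_3 z$, $z(0)=0$, and $\dot\theta=u_1(z-z)$ is the ``virtual signal injection''; it then runs the copy $\dot\xi=A\xi+b$ and the fundamental matrix $\dot\Phi=A\Phi$, uses the error equation $\dot e=Ae$ to obtain $x=\xi+\begmat{\Phi_{11}\\ \Phi_{21}}\theta$, and finally reads off $\caly_2=z-\xi_2$ as the \emph{difference of two measurable filter outputs}. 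You instead bypass the virtual system entirely: you observe that the single unmeasurable scalar $z:=\Phi_{21}\theta$ obeys $\dot z=u_2 y\,\Phi_{11}+u_3 z$ with the known initial condition $z(0)=0$ (the LRE $y=\Delta\theta$ absorbing the unknown parameter into a measurable drive), and reconstruct it by uniqueness of solutions of a scalar linear ODE. Both arguments are sound and rest on the same PEBO insight; yours yields a lower-dimensional realization of $\caly_2$ (one auxiliary scalar filter alongside \eqref{regdyn}, versus the paper's $z$, the two-dimensional $\xi$, and $\Phi$), while the paper's formulation makes the ``virtual dynamics'' $\dot\theta=0$ explicit and generalizes more transparently to the vector case sketched in the conclusions. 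One small point worth making explicit in your write-up: your realization drives the reconstruction with $u_2 y\,\Phi_{11}$ rather than the paper's $u_2 y$, so your auxiliary filter is \emph{not} the same as the paper's $z$-filter \eqref{dotz}; the two constructions produce the same signal $\Phi_{21}\theta$ but via different internal states, which matters only if one later refers to the specific filter states (as the robustness analysis of Section \ref{sec42} does).
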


\begin{proof}\rm
Define the scalar dynamics
\begequ
\label{dotz}
	\dot z  = u_2y+ u_3 z,\;z(0)=0,
\endequ
and note that, since $\theta$ is constant, we can write
\begequ
\lab{dotthe}
	\dot\theta  = u_1 (z-z) .
\endequ
Combining \eqref{dotz} and \eqref{dotthe}, and using \eqref{lre}, we can write the ``virtual" LTV system
\begequ
\label{dotx}
\dot x = A(t) x + b(t)
\endequ
with
$$
\begin{aligned}
x & :=\begmat{\theta \\ z}\\
A(t) &:=\begmat{0 & u_1(t) \\ u_2(t) \Delta(t) & u_3(t)},\\
b(t)&:=\begmat{-u_1(t) x_2(t)\\ 0},
\end{aligned}
$$
and initial conditions
\begequ
\lab{icx}
x(0) =\begmat{\theta \\ 0}.
\endequ
Following the GPEBO approach define the dynamics
\begsubequ
\lab{dyn_gpebo}
\begin{equation}
\lab{dotxi}
    \quad \quad \quad ~ \dot \xi  = A(t) \xi + b(t),\; \xi(0)={\bf 0}_{2 \times 1}
\end{equation}
\vspace{-0.5cm}
\begin{equation}
    \lab{dotphi}
\dot \Phi  = A(t) \Phi,\; \Phi(0)=I_2.
\end{equation}
\endsubequ
Clearly, the state transition matrix $\Omega(t,s)$ of $A(t)$ is given by $\Omega(t,s) = \Phi(t)\Phi^{-1}(s)$.

The error signal
\begequ
\label{e}
e:=\xi - x
\endequ
satisfies $\dot e=A(t)e$. Consequently, from \eqref{e} and the properties of the system matrix $\Phi$ defined in \eqref{dotphi}, we get
$$
\begin{aligned}
x &=\xi - \Phi e(0)\\
&=\xi + \Phi \begmat{\theta \\ 0}
\\
&=\xi +\begmat{\Phi_{11} \\ \Phi_{21}}\theta,
\end{aligned}
$$
where, to get the second identity, we have taken into account \eqref{icx} and the initial conditions in \eqref{dotxi}, and introduced the notation
$$
\Phi=\begmat{\Phi_{11} & \Phi_{12} \\ \Phi_{21} &  \Phi_{22}}.
$$
Note now that
$$
\begmat{y\\ z}=\begmat{ \Delta &0 \\ 0 &1}x=\begmat{ \Delta &0 \\ 0 &1}\Big(\xi +\begmat{\Phi_{11} \\ \Phi_{21}}\theta \Big).
$$
The proof is then completed defining
\begequ
\lab{caly}
\caly =\begmat{ \caly_1\\  \caly_2}:=\begmat{y\\ z}-\begmat{ \Delta &0 \\ 0 &1}\xi,
\endequ
noting that
$$
\caly_2=z- \xi_2
$$
is a measurable signal, and computing the dynamics of the first column of $\Phi$.
\end{proof}

The procedure to generate the new LRE described in Proposition \ref{pro2} is summarized in the diagram in Fig. \ref{fig:block}.

\begin{figure}[h]
    \centering
    \includegraphics[width = 0.35\textwidth]{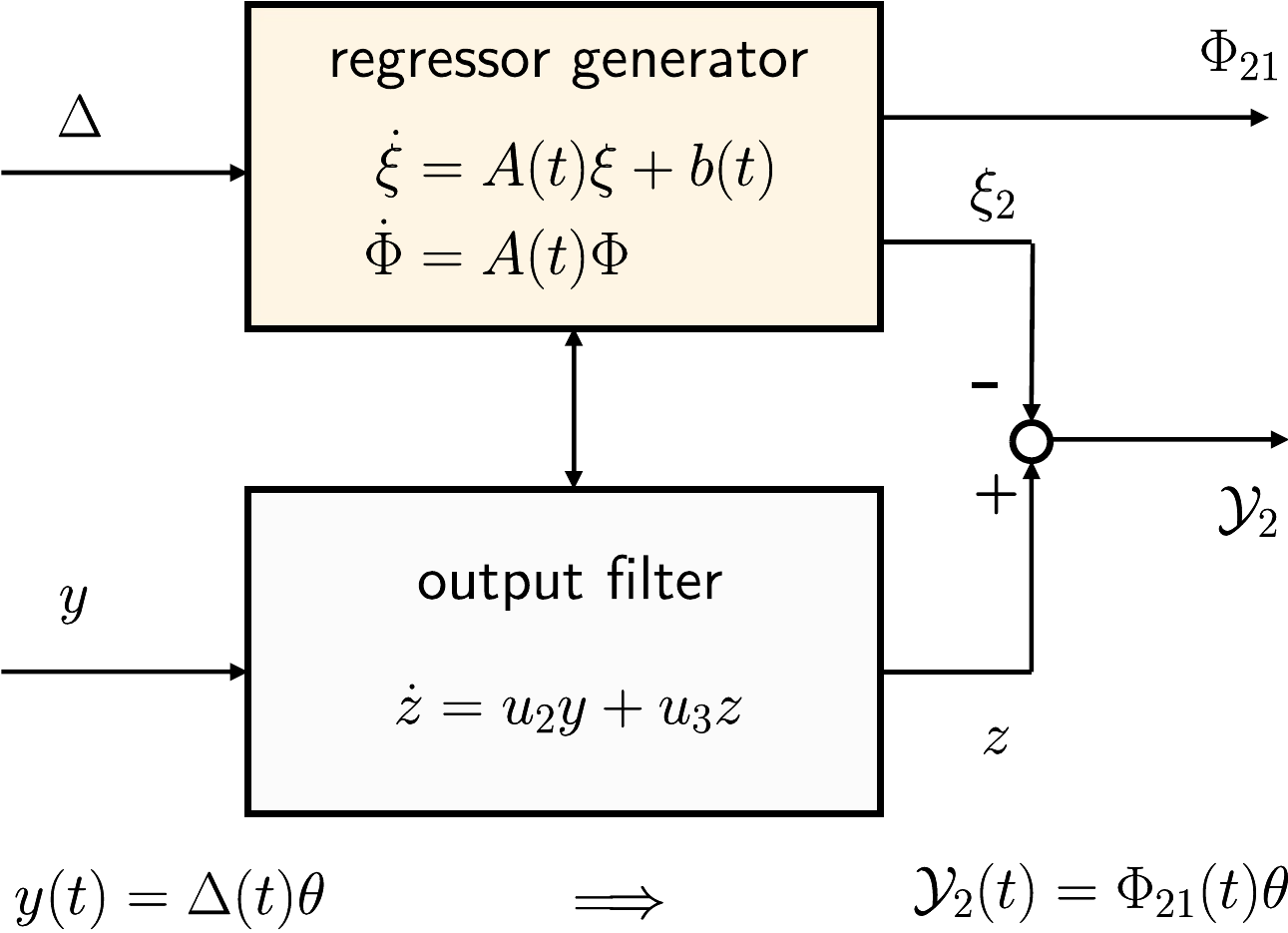}
    \caption{Block diagram of the proposed LRE generator}
    \label{fig:block}
\end{figure}

%
\section{Improving the Excitation of the Regressor}
\label{sec4}
%

\subsection{Main Result}

In this section we follow the basic idea of the energy pumping-and-damping injection construction used in \cite{YIetal} for orbital stabilization to select the signals $u_i$ in \eqref{regdyn} to provide excitation to the new regressor $\Phi_{21}$.

To articulate the main result we need the following assumption.

\begin{assumption}
\lab{ass1}
The {\em bounded} scalar signal $\Delta$ in \eqref{lre} is interval exciting (IE), \cite{KRERIE,TAObook}, that is there exist a time instant $t_c>0$ and a constant $\delta>0$ such that
$$
\int_{0}^{{t_c}}\Delta^2(\tau) d\tau \geq \delta.
$$
\end{assumption}

\begin{proposition}
\lab{pro3}
Assume $\Delta$ verifies Assumption \ref{ass1}. Define the signals
\begequ
\lab{u}
\begin{aligned}
u_1 & =- \alpha \Delta\\
u_2 & =\alpha \\
u_3 &= -\tilde  V(\Phi_{11},\Phi_{21}),
\end{aligned}
\endequ
with $\alpha(t) \in \rea$ a bounded signal such that $\alpha(t) \Delta(t) \not \equiv 0$ for some $t$,
\begequ
\lab{conalpdel}
\alpha \Delta \in \call_1,
\endequ
and
$$
\tilde V( \Phi_{11},\Phi_{21})  := \hal (\Phi_{11}^2 + \Phi_{21}^2) - \beta,
$$
where $ \Phi_{11},\Phi_{21}$ are the solutions of \eqref{regdyn} and $0 <\beta < \hal$. The resulting dynamics verifies the following properties.
\begenu
\item[{\bf F1}] Either $\Phi_{21} \not \in \call_2$ or
\begequ
\lab{bouphi}
 \Phi^2_{11}(t)+\Phi^2_{21}(t) \geq 2  \beta+\varepsilon, \quad \forall t \ge 0,
\endequ
for some (sufficiently small) $\varepsilon>0$.
\item[{\bf F2}]  The full state of the system---$\Phi,\xi$ and $z$---is bounded.
\endenu

\end{proposition}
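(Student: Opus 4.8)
The plan is to substitute the signals \eqref{u} into the regressor dynamics \eqref{regdyn} and exhibit an energy-like function whose derivative exposes the pumping-and-damping structure. With $u_1=-\alpha\Delta$, $u_2=\alpha$, $u_3=-\tilde V$, the first column of \eqref{regdyn} becomes $\dot\Phi_{11}=-\alpha\Delta\,\Phi_{21}$ and $\dot\Phi_{21}=\alpha\Delta\,\Phi_{11}-\tilde V\Phi_{21}$. Introducing $V:=\hal(\Phi_{11}^2+\Phi_{21}^2)$, so that $\tilde V=V-\beta$, a short computation shows that the two $\alpha\Delta$ cross-terms cancel, leaving
\[
\dot V=-(V-\beta)\,\Phi_{21}^2 .
\]
Since $V(0)=\hal$ and $0<\beta<\hal$ we have $V(0)-\beta>0$, and because $\dot V=0$ whenever $V=\beta$, the sign of $V-\beta$ is preserved, i.e. $V(t)>\beta$ for all $t$. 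Hence $\dot V\le0$ and $V(t)\le V(0)=\hal$. This a priori bound confines the (nonlinear) solution to a compact set, so by the standard extension argument the dynamics \eqref{regdyn} is forward complete; it also shows $u_3=-(V-\beta)$ is bounded, consistent with the hypotheses of Proposition \ref{pro2}.

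First I would prove \textbf{F1}. Since $V-\beta>0$ I may divide by it and integrate (equivalently differentiate $\ln(V-\beta)$) to obtain
\[
V(t)-\beta=\Big(\hal-\beta\Big)\exp\!\Big(-\!\int_0^t\Phi_{21}^2(\tau)\,d\tau\Big).
\]
Now dichotomize on square integrability of the new regressor. If $\Phi_{21}\notin\call_2$ the first alternative of \textbf{F1} holds. If instead $\Phi_{21}\in\call_2$, set $c:=\int_0^\infty\Phi_{21}^2\,d\tau<\infty$; then $V(t)-\beta\ge(\hal-\beta)e^{-c}>0$ for all $t$, and since $\Phi_{11}^2+\Phi_{21}^2=2V$ this is exactly \eqref{bouphi} with $\varepsilon:=2(\hal-\beta)e^{-c}>0$.

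Next I would establish \textbf{F2}. The bound $V\le\hal$ gives $\Phi_{11}^2+\Phi_{21}^2\le1$, so the first column of $\Phi$ is bounded; the same energy argument applied to the second column, whose energy $V_2:=\hal(\Phi_{12}^2+\Phi_{22}^2)$ obeys $\dot V_2=-(V-\beta)\Phi_{22}^2\le0$ hence $V_2\le V_2(0)=\hal$, bounds the whole matrix $\Phi$. For $z$, substituting \eqref{u} and $y=\Delta\theta$ into \eqref{dotz} yields $\dot z=-(V-\beta)z+\alpha\Delta\theta$. This is the main obstacle: the damping coefficient $V-\beta$ is only known to be \emph{positive}, not bounded away from zero (indeed it decays to zero precisely in the case $\Phi_{21}\notin\call_2$), so uniform exponential stability of the $z$-subsystem is unavailable. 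The remedy is to use only the sign of the damping together with the hypothesis $\alpha\Delta\in\call_1$ of \eqref{conalpdel}: since $z(0)=0$ and $V-\beta>0$ makes every exponential factor at most one,
\[
|z(t)|\le\int_0^t\exp\!\Big(-\!\int_s^t(V-\beta)\,d\tau\Big)|\alpha(s)\Delta(s)\theta|\,ds\le|\theta|\,\|\alpha\Delta\|_{\call_1}<\infty .
\]

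Finally, boundedness of $\xi$ follows from the error system of Proposition \ref{pro2}: $e:=\xi-x$ satisfies $\dot e=A(t)e$ with $e(0)=(-\theta,0)^\top$, so $e(t)=\Phi(t)\Phi^{-1}(0)e(0)=-\theta(\Phi_{11}(t),\Phi_{21}(t))^\top$ is bounded, and $\xi=x+e$ with $x=(\theta,z)^\top$ bounded, which closes \textbf{F2}. I expect the decisive steps to be the cancellation producing the energy identity, which encodes the pumping-and-damping mechanism and the role of $0<\beta<\hal$, and above all the boundedness of $z$, where the $\call_1$ character of $\alpha\Delta$ must compensate for the loss of uniform damping; the excitation hypotheses (Assumption \ref{ass1} and $\alpha\Delta\not\equiv0$) serve to steer the dichotomy of \textbf{F1} towards the favorable branch $\Phi_{21}\notin\call_2$, which by Proposition \ref{pro1} is the property relevant for parameter convergence.
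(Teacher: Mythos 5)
Your proof is correct and follows essentially the same route as the paper: the same energy function $\tilde V$ with the cancellation yielding $\dot{\tilde V}=-\Phi_{21}^2\tilde V$, the same second-column energy argument for $\Phi_{12},\Phi_{22}$, the same variation-of-constants bound $|z(t)|\le|\theta|\,\|\alpha\Delta\|_{\call_1}$ exploiting $\tilde V\ge 0$, and the same reconstruction $\xi=x+e$ for boundedness of $\xi$. Your two refinements---writing the explicit solution $\tilde V(t)=(\hal-\beta)\exp(-\int_0^t\Phi_{21}^2)$, which makes the dichotomy in \textbf{F1} immediate and produces an explicit $\varepsilon$, and using the $\call_1$ bound for $z$ uniformly instead of the paper's two-case split---are minor streamlinings of the same argument, not a different method.
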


\begin{proof}\rm
Replacing the signals $u_i$ in \eqref{u} to \eqref{regdyn} we get
$$
\begin{aligned}
\dot\Phi &={A(t)}\Phi\\
&=
\begmat{0 & -{\alpha} \Delta \\ {\alpha }\Delta &  -\tilde V}\Phi.
\end{aligned}
$$
The dynamics of the remaining states of the regressor generator, that is, $\xi$ and $z$, is given by
$$
\begmat{\dot\xi_1 \\ \dot \xi_2 \\ \dot z }=
\begmat{
0 & -\alpha\Delta & \alpha \Delta \\
 \alpha \Delta & - \tilde V & 0 \\
0 & 0& - \tilde V
}
\begmat{\xi_1 \\ \xi_2 \\ z}
+
\begmat{0 \\ 0 \\ \alpha y}.
$$
First, note that the IE Assumption \ref{ass1} rules out the extreme case $y(t)=\Delta(t)\equiv 0$. Indeed, in this case all signals of the regressor generator remain at their initial conditions, which is an equilibrium. To avoid this situation we also require the technical assumption that  $\alpha(t) \Delta(t) \not \equiv 0$ for some $t\ge 0$.

From the first column of the dynamic equation of $\Phi$ we immediately get
\begequ
\lab{dottilv}
\dot {\tilde V}=-\Phi^2_{21}\tilde V \leq 0,
\endequ
from which  we conclude boundedness of $(\Phi_{11}, \Phi_{21})$ as well as the invariance of the set
$$
\Omega:=\{q\in \rea^2|\tilde V(q) =0\}.
$$
Now, invoking the initial condition constraint \eqref{ic} yields
$$
\begin{aligned}
&  \quad \hal(\Phi^2_{11}(0)+\Phi^2_{21}(0))  =\hal \\
 \iff & \quad \tilde V( \Phi_{11}(0),\Phi_{21}(0))  + \beta = \hal \\
 \iff &\quad \tilde V( \Phi_{11}(0),\Phi_{21}(0))  = \hal  - \beta \\
  \implies & \quad  \tilde V( \Phi_{11}(0),\Phi_{21}(0))  > 0,
\end{aligned}
$$
where we have used the fact that $\beta < \hal$ to get the last implication. The latter inequality implies that the trajectory starts {\em outside} the disk described by the set $\Omega$. This, together with the invariance of the set implies that the {\em whole trajectory} $(\Phi_{11}(t), \Phi_{21}(t))$ is outside this disk, that is,
\begequ
\lab{outdis}
 \tilde V(t):=  \tilde V( \Phi_{11}(t),\Phi_{21}(t)) \geq 0,\;\forall t \geq 0.
\endequ

Let us, now, consider the two possible, mutually exclusive, cases:
\begenu
\item[{\bf (i)}] $\liminf \tilde V( t) = 0$
\item[{\bf (ii)}]   $\liminf \tilde V( t) \neq 0$.
\endenu

Consider, first, the second case and note that, due to the invariance of the set $\Omega$, this case also rules out the possibility that the convergence to zero happens in finite time. In this case the inequality \eqref{bouphi} holds true.

In the first case, integrating \eqref{dottilv} we see that the following equivalence holds true:
$$
\liminf \tilde V(t)=0\quad \iff \quad \Phi_{21} \not \in \call_2,
$$
completing the proof of the claim  {\bf F1}.

Now we show boundedness of the remaining states. For the second column of the matrix $\Phi$, note that it satisfies the dynamics
$$
\begin{aligned}
\begmat{\dot\Phi_{12} \\ \dot\Phi_{22}}&=
\begmat{0 & -{\alpha} \Delta \\ {\alpha }\Delta &  -\tilde V}
\begmat{ \Phi_{12} \\  \Phi_{22}},\; \begmat{\Phi_{12}(0) \\ \Phi_{22}(0)}=\begmat{0 \\ 1}.
\end{aligned}
$$
Consider the function
$$
W( \Phi_{12},\Phi_{22})  := \hal (\Phi_{12}^2 + \Phi_{22}^2) ,
$$
the derivative of which along the trajectories of the system \eqref{regdyn} satisfies
$$
\dot {W}=-\Phi^2_{22}\tilde V \leq 0,
$$
where the bound---from which we conclude the boundedness of $(\Phi_{12}, \Phi_{22})$---follows from \eqref{outdis}.

Finally, from \eqref{e} we have
$$
\begin{aligned}
\xi & = x + e\\
&= \begmat{\theta \\  z } +\Phi \begmat{\theta \\ 0 }.
\end{aligned}
$$
Hence, if $z$ is bounded we conclude that $\xi$ is also bounded completing the proof.

To prove boundedness of $z$ we, again, consider the two scenarios  {\bf (i)} and {\bf (ii)} described above. For the second case we consider the function
$$
V_z(z) := \hal z^2.
$$
Its time derivative along the trajectories of \eqref{regdyn} satisfies
$$
\begin{aligned}
	\dot V_z & = -\tilde V z^2 + \alpha y z\\
& \le - (\tilde V - {\epsilon \over 2})z^2 + {1 \over 2 \epsilon} (\alpha y)^2\\
& \le - (\varepsilon -\epsilon)V_z + {1 \over 2 \epsilon} (\alpha y)^2\\
& = - {\varepsilon \over 2}V_z + { 1 \over 4\varepsilon} (\alpha y)^2,
\end{aligned}
$$
where we have used the fact that, for all $\epsilon >0$, we have that
$$
\alpha y z \leq  {1 \over 2\epsilon}(\alpha y)^2+{\epsilon \over 2}z^2,
$$
to get the first inequality, the fact that $ \tilde V \geq {\varepsilon \over 2}$, which follows from \eqref{bouphi},  to get the second bound and selected $\epsilon={\varepsilon \over 2}$ in the final identity. The proof of boundedness of $z$ follows, then, from the inequality above and the fact that $\alpha y$ is bounded.

Now, consider the scenario {\bf (i)}. It is clear that the solution of the equation
\begequ
\lab{zequ}
\dot z = - \tilde V z + \alpha \Delta \theta,
\endequ
is given by
\begequ
\lab{solz}
z(t) =\left( \int_0^t e^{-\int_\sigma^t \tilde V(s)ds} \alpha(\sigma) \Delta(\sigma) d\sigma \right)\theta.
\endequ
From \eqref{solz} we immediately obtain
$$
\begin{aligned}
|z(t)| & \leq \bigg| \int_0^t e^{-\int_\sigma^t \tilde V(s)ds} \alpha(\sigma) \Delta(\sigma) d\sigma \bigg| |\theta|\\
 & \leq \int_0^t \Big|  e^{-\int_\sigma^t \tilde V(s)ds} \Big|  |\alpha(\sigma) \Delta(\sigma)| d\sigma|\theta|\\
  & \leq \int_0^t |\alpha(\sigma) \Delta(\sigma)| d\sigma|\theta|,
\end{aligned}
$$
where, to obtain the last bound, we have used the inequality \eqref{outdis}, which implies that
\begequ
\label{exp<1}
\left|  \exp \left(-\int_\sigma^t \tilde V(s)ds \right) \right| \leq 1.
\endequ
Boundedness of $z$ is concluded, from the inequality above, if  $\alpha \Delta \in \call_1$.
\end{proof}
\vspace{0.5cm}

Clearly, invoking the equivalence in  the claim {\bf P1} of Proposition \ref{pro1}, we have that the condition  $\Phi_{21} \not \in \call_2$ of the claim {\bf F1} of Proposition \ref{pro2} ensures {\em global convergence} of a gradient estimator with the new LRE. On the other hand, the inequality \eqref{bouphi} in the claim {\bf F1} guarantees the following excitation properties for the new regressor $\Phi_{21}$. If we can ensure that $\Phi_{11} \not \to \sqrt{2 \beta}$ then $\Phi_{21} \not \to 0$ and, consequently, $\Phi_{21}$ is PE.\footnote{Recall that if a scalar signal is PE then it is not square integrable, but the converse is not true \cite{ARAetal}.} In this sense, the new regressor $\Phi_{21}$ is ``more exciting'' than the original IE regressor $\Delta$. Although we have not been able to prove this property, it  has systematically been true in all our simulations, some of which are given in Section \ref{sec5}. Besides, in the following corollary we identify a scenario which guarantees that the new regressor $\Phi_{21} \notin \call_2$.

\begin{corollary}
\label{cor:1}
Consider the filter design in Proposition \ref{pro3}. If $\lim_{t\to\infty} \alpha(t) \Delta(t) =0$, then $\Phi_{21} \notin \call_2$.\footnote{It equivalently means that the second case in \eqref{bouphi} does not occur.}
\end{corollary}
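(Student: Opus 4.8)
\emph{Proof strategy.} The plan is to argue by contradiction, exploiting the dichotomy in claim {\bf F1}. Suppose, contrary to the assertion, that $\Phi_{21}\in\call_2$. By the equivalence established in the proof of Proposition \ref{pro3} this is exactly the situation in which \eqref{bouphi} holds; indeed, since $\tilde V$ is non-increasing by \eqref{dottilv} and $\tilde V(t)\ge 0$ by \eqref{outdis}, the limit $L:=\lim_{t\to\infty}\tilde V(t)$ exists, and the integrated form $\ln\big(\tilde V(0)/\tilde V(t)\big)=\int_0^t\Phi_{21}^2(\tau)\,d\tau$ of \eqref{dottilv} shows that $\Phi_{21}\in\call_2$ forces $L>0$. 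My goal is therefore to contradict $L>0$ using the standing hypothesis $\lim_{t\to\infty}\alpha\Delta=0$.

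The first block of deductions collects the consequences of $\Phi_{21}\in\call_2$. Since the full state is bounded by claim {\bf F2} and $\alpha,\Delta$ are bounded, the right-hand side $\dot\Phi_{21}=\alpha\Delta\Phi_{11}-\tilde V\Phi_{21}$ is bounded, so $\Phi_{21}$ is uniformly continuous and Barbalat's lemma gives $\Phi_{21}(t)\to 0$. Next, because $\alpha\Delta\in\call_1$ by \eqref{conalpdel} and $\Phi_{21}$ is bounded, $\dot\Phi_{11}=-\alpha\Delta\Phi_{21}\in\call_1$, hence $\Phi_{11}(t)$ converges to a finite limit $\Phi_{11}^\star$. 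Passing to the limit in $\tilde V=\hal(\Phi_{11}^2+\Phi_{21}^2)-\beta$ then yields $(\Phi_{11}^\star)^2=2(L+\beta)>2\beta$; in particular $\Phi_{11}^\star\neq 0$ and the trajectory aligns asymptotically with the $\Phi_{11}$-axis, outside the disk $\{\hal(\Phi_{11}^2+\Phi_{21}^2)=\beta\}$ whose invariance was noted in the proof of Proposition \ref{pro3}.

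The remaining, and decisive, step is to turn $\lim_{t\to\infty}\alpha\Delta=0$ into a contradiction with $L>0$, and this is where I expect the main obstacle. The naive move---evaluating $\dot\Phi_{21}=\alpha\Delta\Phi_{11}-\tilde V\Phi_{21}$ in the limit---is inconclusive, since both terms vanish ($\alpha\Delta\to 0$ with $\Phi_{11}\to\Phi_{11}^\star$, and $\Phi_{21}\to 0$), so the identity degenerates to $0=0$ and says nothing about $L$. The genuine difficulty is that the offending set $\{\Phi_{21}=0\}$ outside the disk, although not invariant for the forced dynamics (on it $\dot\Phi_{21}=\alpha\Delta\Phi_{11}$), becomes asymptotically attracting precisely because $\alpha\Delta$ decays: a LaSalle-type analysis of the limiting autonomous system $\dot\Phi_{11}=0,\ \dot\Phi_{21}=-\tilde V\Phi_{21}$ admits the whole axis segment $\{\Phi_{21}=0,\ \Phi_{11}^2>2\beta\}$ as equilibria and hence does not rule out $L>0$. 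To close the argument one must instead produce a strictly positive lower bound on $\int_0^\infty\Phi_{21}^2(\tau)\,d\tau$---equivalently $L=0$---generated by the interval excitation of Assumption \ref{ass1} acting through the forcing before it dies out. Establishing such a bound is the crux of the proof, and it is exactly the point at which the qualitative information $\alpha\Delta\to 0$ together with $\alpha\Delta\in\call_1$ seems to need reinforcement by a quantitative use of the early-time excitation; I would concentrate the entire effort there.
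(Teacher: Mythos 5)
Your proposal reproduces, essentially verbatim, the first half of the paper's own proof: the contradiction hypothesis $\Phi_{21}\in\call_2$, the existence of the limit $\tilde V(t)\to\varepsilon>0$ obtained by integrating \eqref{dottilv} together with \eqref{outdis}, the conclusion $\Phi_{21}(t)\to 0$ from $\Phi_{21}\in\call_2$ and $\dot\Phi_{21}\in\call_\infty$, and the consequent convergence $\Phi_{11}(t)\to\sqrt{2(\beta+\varepsilon)}$. Where you stop, the paper closes the argument with a phase-variable device: it rewrites $(\Phi_{11},\Phi_{21})$ in polar coordinates $\Phi_{11}=r\cos\rho$, $\Phi_{21}=r\sin\rho$, notes that $\rho$ and $r$ converge, hence $\dot\rho\to 0$, and computes $\dot\rho=\alpha\Delta-\hal+\beta/r^2$; passing to the limit with $\alpha\Delta\to 0$ and $r^2\to 2(\beta+\varepsilon)$ then forces $\varepsilon=0$, the desired contradiction. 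So, measured against the paper, the concrete missing idea in your attempt is this angular coordinate, whose limiting equation is claimed to retain information that the radial (Lyapunov) analysis discards.

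However, your diagnosis that every naive limit evaluation degenerates to $0=0$ is not merely a confession of difficulty---it identifies an actual error in the paper's closing step. Differentiating $\rho=\arctan(\Phi_{21}/\Phi_{11})$ along $\dot\Phi_{11}=-\alpha\Delta\,\Phi_{21}$, $\dot\Phi_{21}=\alpha\Delta\,\Phi_{11}-\tilde V\Phi_{21}$ gives
\begin{equation*}
\dot\rho \;=\; \frac{\dot\Phi_{21}\Phi_{11}-\dot\Phi_{11}\Phi_{21}}{r^2} \;=\; \alpha\Delta \;-\; \tilde V\,\frac{\Phi_{11}\Phi_{21}}{r^2},
\end{equation*}
not $\alpha\Delta-\tilde V/r^2=\alpha\Delta-\hal+\beta/r^2$: the factor $\Phi_{11}\Phi_{21}/r^2=\sin\rho\cos\rho$ has been dropped. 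Since $\Phi_{21}\to 0$ under the contradiction hypothesis, that factor tends to zero, so $\dot\rho\to 0$ holds for \emph{every} $\varepsilon\ge 0$ and no contradiction results---exactly the degeneracy you predicted. Moreover, the quantitative lower bound on $\int_0^\infty\Phi_{21}^2$ that you propose to pursue cannot exist under the stated hypotheses: take $\Delta$ of small amplitude $c\neq 0$ supported on $[0,1]$ and $\alpha\equiv 1$, so that Assumption \ref{ass1}, \eqref{conalpdel}, $\alpha\Delta\not\equiv 0$ and $\alpha\Delta\to 0$ all hold. For small $c$ the solution of \eqref{regdyn} stays near $(1,0)$ on $[0,1]$, so $\Phi_{11}^2(1)>2\beta$ (recall $\beta<\hal$); for $t\ge 1$ the forcing vanishes, $\Phi_{11}$ is frozen, $\tilde V\ge\hal\Phi_{11}^2(1)-\beta>0$, and $\dot\Phi_{21}=-\tilde V\Phi_{21}$ makes $\Phi_{21}$ decay exponentially, hence $\Phi_{21}\in\call_2$ while \eqref{bouphi} holds. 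So the step you flagged as the crux is not just hard, it fails: the corollary requires additional hypotheses (roughly, enough integrated excitation to drive $|\Phi_{11}|$ below $\sqrt{2\beta}$ before $\alpha\Delta$ dies out), and your attempt, while incomplete as a proof, locates the obstruction precisely where the paper's argument breaks.
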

\begin{proof}
We prove this fact by contradiction assuming that $\Phi_{21} \in \call_2$. According to the analysis in Proposition \ref{pro3}, $\tilde{V}$ has a limit when $t\to +\infty$, which can be zero or a positive constant, invoking non-negativeness of $\tilde V$ from \eqref{outdis}. Clearly, for the case $\Phi_{21} \in \call_2$, we have
$$
\lim_{t\to\infty} \tilde{V}(t) = \varepsilon
$$
for some $\varepsilon>0$. From $\Phi_{21} \in \call_2$ and $\dot{\Phi}_{21} \in \call_\infty$, we have $\Phi_{21} \to 0$, and consequently
$$
\lim_{t\to\infty} \Phi_{11}(t) = \sqrt{2(\beta + \varepsilon)}.
$$

Rewrite now $(\Phi_{11}, \Phi_{21})$ in polar coordinates, that is
$$
\Phi_{11}= r\cos\rho, \quad \Phi_{21}= r\sin\rho,
$$
with
\begequ
\label{conv:theta}
\begin{aligned}
     \rho & := \arctan\left({\Phi_{21} \over \Phi_{11}}\right),
     \\
     r &:= \sqrt{\Phi_{11}^2 + \Phi_{21}^2}.
\end{aligned}
\endequ
The transformation is well-posed and bijective almost globally except of the origin, which the above analysis proves that is not possible. In terms of the convergence of $\Phi_{21}$ and $\Phi_{22}$, we have
\begequ
\label{lim_rho}
\lim_{t\to\infty} \rho(t) = \rho_\star ,\quad
\lim_{t\to\infty} r(t) = \sqrt{2(\beta + \varepsilon)},
\endequ
for some constant $\rho_\star \in [0,2\pi)$. On the other hand, the dynamics of $\rho$ is given by
$$
\begin{aligned}
     \dot \rho & = {1\over 1+ ({\Phi_{21} \over \Phi_{11}})^2} \cdot {\dot{\Phi}_{21} \Phi_{11} - \dot{\Phi}_{11}\Phi_{22} \over \Phi_{11}^2}
     \\
     & = \alpha \Delta - {1\over 2} +{\beta \over r^2}.
\end{aligned}
$$
Invoking the convergence condition $\rho \to \rho_\star$ as $t\to\infty$, the boundedness of $\dot\rho$ and the $C^1$-smoothness of the solution $\rho(t)$, we have
$$
\lim_{t\to\infty} \dot\rho(t) = \lim_{t\to\infty} \left[\alpha(t) \Delta(t) - {1\over 2} +{\beta \over r^2(t)} \right]=0.
$$
From the convergence condition $\alpha\Delta \to 0$ and $r \to \sqrt{2(\beta + \varepsilon)}$, we have $\varepsilon=0$, which contradicts the assumption $\varepsilon>0$. Therefore, $\Phi_{21} \notin \call_2$, completing the proof.
\end{proof}

Two interesting observations are, first, that the condition  $\alpha \Delta \in \call_1$ cannot be replaced by  $\alpha \Delta \in \call_2$. Indeed, a counterexample to this claim is given by the selection
$$
 \tilde V(t)=e^{-t},\;\alpha(t) \Delta(t)={1\over t+1},
$$
for which $\alpha \Delta \in \call_2$, but $\alpha \Delta \not \in \call_1$. In this case the system \eqref{zequ} has unbounded solutions. Second, the condition  $\alpha \Delta \in \call_1$ is sufficient, but not necessary with a counterexample to this claim given by the selection
$$
 \tilde V(t)=e^{-t},\;\alpha(t) \Delta(t)={\sin t\over t+1},
$$
that does not satisfy the $\call_1$ condition but with associated solutions \eqref{zequ} bounded. Note also that ${\sin t \over t+1} \in \call_2$!

\subsection{Robustness}
\label{sec42}

We now consider robustness of the proposed scheme.\footnote{Robustness of the filter in Proposition \ref{pro2} is studied in the BIBO sense from the perturbed term to the filtered output, which is a property of the filter regardless of collected data.} For simplicity we gather all perturbations in a term $\delta(t)$, {\em i.e.}, the original LRE becomes
$$
y_{\tt N}(t) = \Delta(t) \theta + \delta(t),
$$
in which $(\cdot)_{\tt N}$ is used to represent the perturbed signal of the signal $(\cdot)$. Assuming that the selection of $\alpha$ is independent of the regression output $y$, the filter dynamics \eqref{dotz} becomes
\begequ
\label{dotzn}
\dot z_{\tt N} = u_2 y_{\tt N} + u_3z_{\tt N} , ~ z_{\tt N}(0)=0.
\endequ
From linearity we have
$$
z_{\tt N} = z + \delta_z,
$$
with $z$ the solution of $\dot z = u_2 y + u_3 z$ from $z(0)=0$ and
\begequ
\label{polre}
\delta_z(t) = \int_0^t \exp\left( \int_s^t - \tilde V(\tau) d\tau\right) \alpha(s)\delta(s) ds.
\endequ
Due to the above parameterization of the perturbation, $\delta$ does not affect $A(t)$, equivalently, the solution of $\Phi(t)$. On the other hand, the $\xi$-dynamics is cascaded to the filter \eqref{dotz}, {\em i.e.}, for the perturbed case
$$
\dot \xi_{\tt N} = A(t) \xi_{\tt N} + \begmat{\alpha \Delta z_{\tt N} \\ 0}
= A(t) \xi_{\tt N} + \begmat{\alpha \Delta (z + \delta_z) \\ 0},
$$
with $\xi_{\tt N}(0)= \mathbf{0}_{2\times 1}$. From superposition we have $\xi_{\tt N} = \xi + \delta_\xi$ with
$$
\delta_\xi(t) := \int_0^t \Phi(t)\Phi^{-1}(s) \alpha(s)\Delta(s) \begmat{\delta_z(s) \\0} ds.
$$
In terms of $\caly_2 = z- \xi_2$, we have the perturbed new LRE as
\begequ
\label{pnlre}
\caly_2 = \Phi_{21}\theta + \delta_{\tt N},
\endequ
with $\delta_{\tt N} := \delta_z - \delta_{\xi,2}$. Some observations are in order.
\begin{itemize}
\item[-] The perturbation $\delta$ from the original LRE \eqref{polre} affects the new LRE \eqref{pnlre} in a different way.

\item[-] Consider the perturbation $\delta$ as a zero-average high-frequency measurement noise. Noting that $\tilde V(t) \ge 0$ for all $t\ge0$, roughly speaking, the filter \eqref{dotz} plays a role similarly to a low-pass filter (which may be analyzed via standard averaging analysis), making the new LRE robust to the high-frequency component in $\delta$.

\item[-] For biased but bounded perturbations, {\em e.g.}, environmental disturbances and unmodelled dynamics, we have
$$
\begin{aligned}
|\delta_z| & \le \int_0^t \left|\exp\left( \int_s^t - \tilde V(\tau) d\tau\right)\right| |\alpha(s)||\delta(s)| ds
\\
& \le \int_0^t|\alpha(s)|\|\delta\|_{\infty} ds,
\end{aligned}
$$
in which we have used \eqref{exp<1}. Clearly, selecting $\alpha \in \call_1$ can guarantee $\delta_z \in \call_\infty$, and a vanishing signal $\alpha$ may also yield $\delta_\xi \in \call_\infty$, thus providing a BIBO stability property.
\end{itemize}

%
\section{Simulations}
 \lab{sec5}
 %

In this section we provide comprehensive simulation results to verify our main claims. In all simulations the parameter for pumping-and-damping injection is selected as $\beta = 0.4$ and---following the construction of the filter proposed above---the initial conditions  are fixed at
$$
z(0) = 0, \; \Phi(0)= I_2, \; \xi(0)= 0 .
$$
We first consider a constant unknown parameter $\theta = -5$, and test the proposed regressor generator under different excitation conditions for the original regressor $\Delta$. Namely, they are chosen as follows:
$$
\begin{aligned}
    \Delta_1(t) & = \hal e^{-t}\\
    \Delta_2(t) & =
    \left\{
    \begin{aligned}
         \hal & &t\in[0,5] \mbox{s}
         \\
         0 & & t> 5\mbox{s}.
    \end{aligned}
    \right.
\end{aligned}
$$
Note that these signals are clearly not PE and they belong to $\call_2$, hence the gradient estimator \eqref{graest} does not ensure parameter convergence. On the other hand, they satisfy the extremely weak condition that they are IE. For these regressor we use $\alpha(t) \equiv 1$, which verifies $\alpha\Delta \in \call_1$, as well as the additional condition in Corollary \ref{cor:1}.

To estimate the parameter we use the standard gradient descent adaptation with the old and the new regressors, that is,
 $$
 \dot{\hat \theta}_{\tt old} =  \gamma \Delta (y - \Delta \hat\theta_{\tt old}),
 $$
 and
  $$
 \dot{\hat \theta}_{\tt new} =  \gamma \Phi_{21} (\mathcal{Y}_2 - \Phi_{21} \hat\theta_{\tt new}),
 $$
selecting, in both cases, $\gamma = 2$. The initial conditions are selected as
$
\hat{\theta}_{\tt new}(0)=0$ 
and
$ \hat\theta_{\tt old}(0)=0.
$
As we can see from Figs. \ref{fig:1}-\ref{fig:2}, the proposed regressor generator transforms the interval exciting regressors into PE regressors ensuring that the estimate $\hat\theta_{\tt new}$  exponentially converges to the true value. On the other hand, the estimates generated with the original regressors $\hat\theta_{\tt old}$ exhibit a steady state error. In these figures, we also plot the evolution of the Frobenius norm of $\Phi(t)$, that is $\|\Phi\|_F = \sqrt{{\trace}(\Phi^\top \Phi)}$. It is also observed that all the states are bounded.
\begin{figure}[!htb]
    \centering
   \subfloat[Original and new regressors]{
   \includegraphics[width=0.23\textwidth]{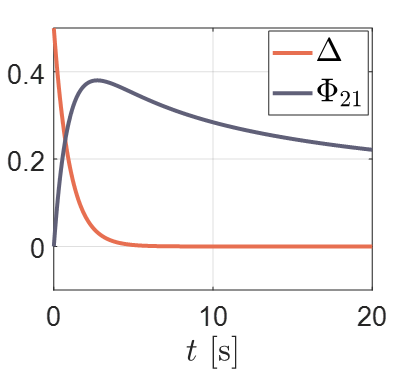}
   }
   \subfloat[Estimates using the new and original LRE]{
   \includegraphics[width=0.23\textwidth]{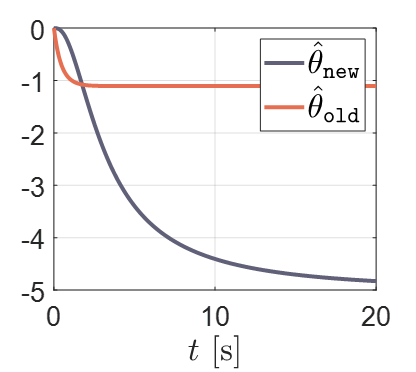}
   }

   \subfloat[Internal states]{
   \includegraphics[width=0.23\textwidth]{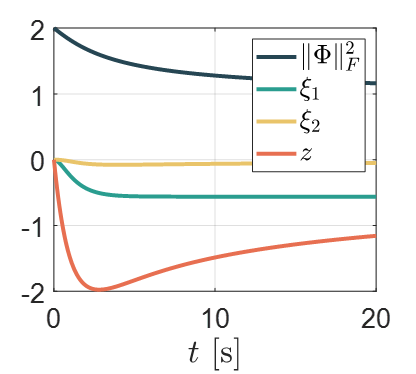}
   }
   \subfloat[The value of $\int_0^t|\alpha(s)\Delta(s)|$ds]{
   \includegraphics[width=0.23\textwidth]{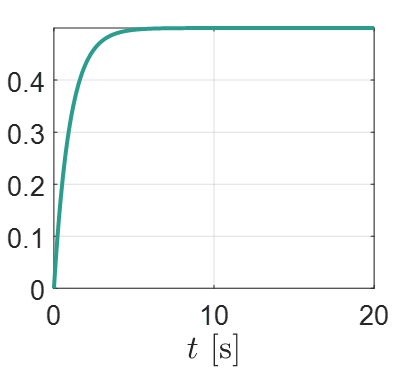}
   }
    \caption{Simulation results for the signal $\Delta_1(t) = \hal\exp(-t)$}
    \label{fig:1}
\end{figure}
\begin{figure}[!htb]
    \centering
   \subfloat[Original and new regressors]{
   \includegraphics[width=0.23\textwidth]{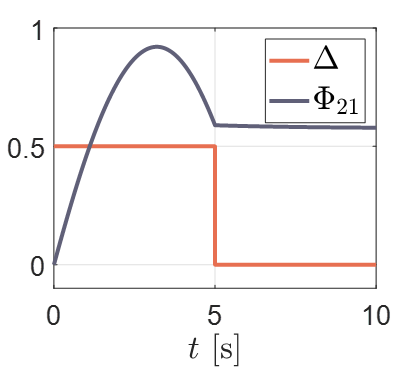}
   }
   \subfloat[Estimates using the new and original LREs]{
   \includegraphics[width=0.23\textwidth]{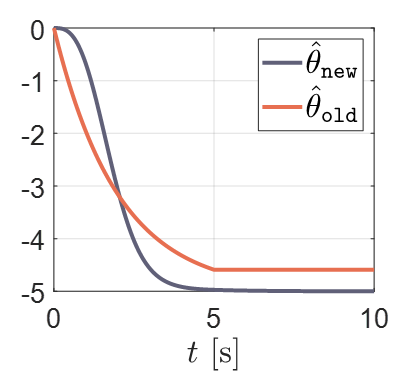}
   }

   \subfloat[Internal states]{
   \includegraphics[width=0.23\textwidth]{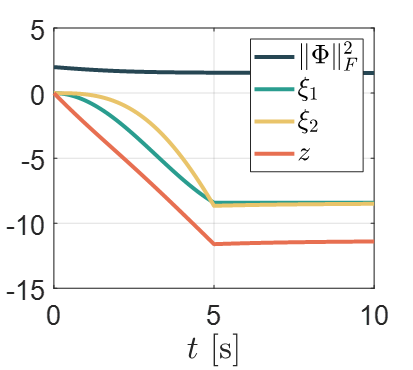}
   }
   \subfloat[The value of $\int_0^t|\alpha(s)\Delta(s)|$ds]{
   \includegraphics[width=0.23\textwidth]{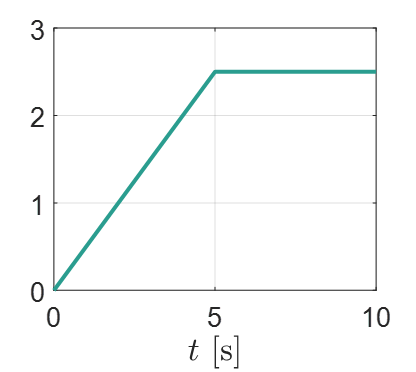}
   }
    \caption{Simulation results for the step signal $\Delta_2(t)$}
    \label{fig:2}
\end{figure}

Now, let us consider another vanishing signal
$$
   \Delta_3(t)  = {1\over 2t + 1} .
$$
It is clear that a non-zero constant $\alpha(t)$ cannot guarantee the condition $\alpha\Delta \in \call_1$. Therefore,  we parameterize $\alpha$ as
$$
\alpha(t) = \alpha_0(t) - k y(t)z(t),
$$
with $k>0$. Then, the key equation \eqref{dotz} takes the form
$$
\dot z = - (\tilde V + k y^2) z + \alpha_0 y,
$$
with the second term in $\alpha(t)$ providing additional damping in the dynamics of $z$. By carefully selecting $\alpha_0$, we may guarantee that $\alpha\Delta \in \call_1$. For $\Delta_3$ we select $k=0.1$ and $\alpha_0(t) = e^{- {t\over 10}}$ in simulation, see Fig. \ref{fig:3} for the corresponding results. It also illustrates the theoretical analysis.

We then add high-frequency noise in the LRE output $y$ via the ``Uniform Random Number'' block in Simulink/Matlab$^\text{TM}$. The signal-to-noise ratio (SNR) is selected around 20 dB with sampling time 0.01s. In this example we consider the signal $\Delta_2$. The simulation results are given in Fig. \ref{fig:5}, from which we observe that the new regressor output $\caly_2$ is hardly affected by the noise, as analyzed in Subsection \ref{sec42}.

\begin{figure}[!htb]
    \centering
   \subfloat[Regression outputs $y(t)$ and $\caly_2$]{
   \includegraphics[width=0.23\textwidth]{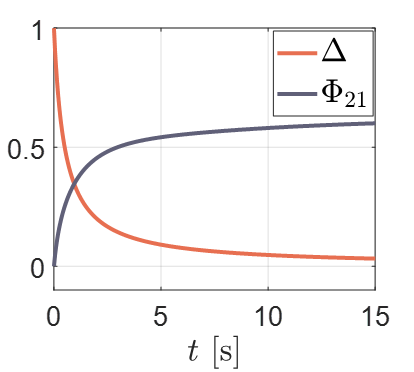}
   }
    \subfloat[Estimate using the new LRE]{
   \includegraphics[width=0.23\textwidth]{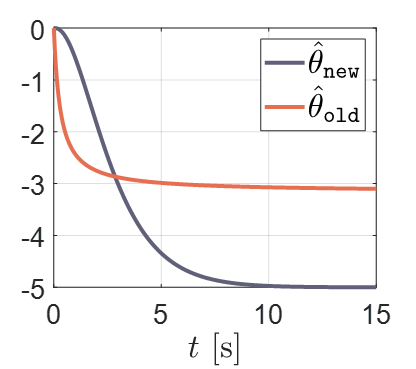}
   }
   \\
   \subfloat[Internal states]{
   \includegraphics[width=0.23\textwidth]{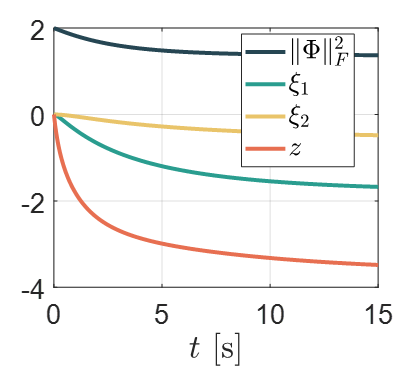}
   }
   \subfloat[The value of $\int_0^t|\alpha(s)\Delta(s)|$ds]{
   \includegraphics[width=0.23\textwidth]{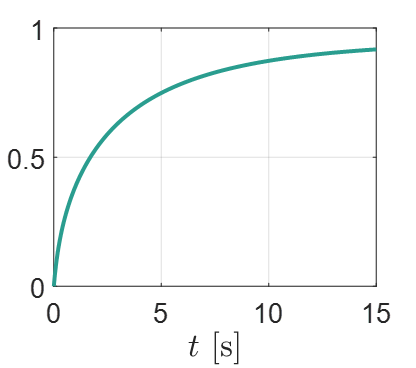}
   }
    \caption{Simulation results for the signal $\Delta_3(t)= {1\over 2t+1}$}
    \label{fig:3}
\end{figure}
\begin{figure}[!htb]
    \centering
   \subfloat[Outputs in the original and new regressors]{
   \includegraphics[width=0.23\textwidth]{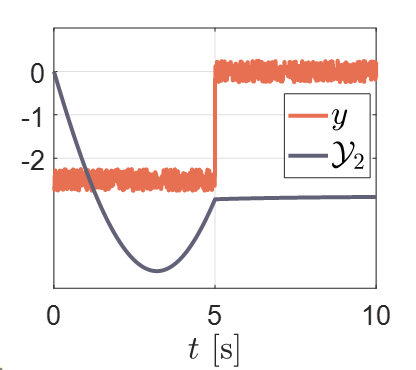}
   }
   \subfloat[Estimates using the new LRE]{
   \includegraphics[width=0.23\textwidth]{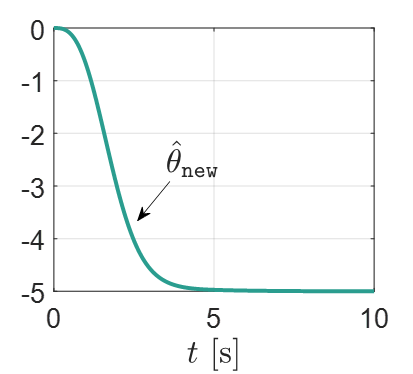}
   }
    \label{fig:5}
    \caption{Simulation results in the presence of high-frequency measurement noise for the signal $\Delta_2(t)$}
\end{figure}

As is well known, one of the main motivations of {\em on-line} identification is to track (slowly) time-varying parameters. To verify the alertness to the variation of the unknown parameter of the proposed procedure a simulation in which the parameter $\theta$ jumps from $-5$ to $-4$ at $t=10$s has been carried out with the IE regressor
$$   \Delta_4(t)  =
    \left\{
    \begin{aligned}
         \sin\left({\pi \over 10}t \right), & &t\in[0,12] \mbox{s}
         \\
         0, & & t> 12\mbox{s}
    \end{aligned}
    \right.
$$
and $\gamma =2$ and $\alpha(t)\equiv 1$.\footnote{We need to consider the signal being IE after the parameter variation for a while. Otherwise, the change of parameter cannot be captured by the output $y$.} The simulation results are shown in Fig. \ref{fig:5} from which we empirically conclude that the proposed new regressor is capable of dealing with parameter variations. The estimation $\hat{\theta}_{\tt old}$ from the original regressor is also presented, which, as we can see, fails to get satisfactory performance due to the lack of PE.

We underscore that there is an ultimate error from $\hat\theta_{\tt new}$, because Proposition \ref{pro2} only holds when $\theta$ is constant. For a time-varying $\theta$, a perturbation term appears in the new LRE \eqref{newlre}, inducing the observed estimation error. As future work, it would be interesting to study the ultimate error quantitatively. To overcome this issue, re-initialization of the filter in Proposition \ref{pro2} after parameter variation may be implemented, and it would be interesting to study how to re-initialize it automatically.
\begin{figure}[!htb]
    \centering
    \includegraphics[width=0.55\linewidth]{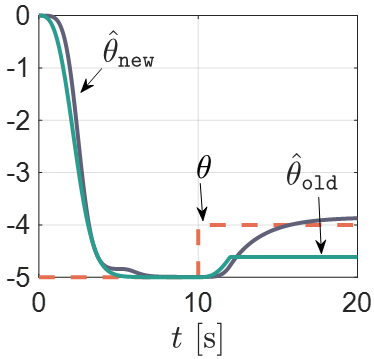}
    \caption{Alertness to time-varying parameter $\theta$ for the IE signal $\Delta_4(t)$}
    \label{fig:5}
\end{figure}

%
\section{Conclusions and Future Research}
 \lab{sec6}
 %
 We have presented a procedure to generate a new LRE in which the new regressor has some excitation properties, even if the original regressor does not. The key idea to carry out this task is the use of the virtual dynamics $\dot \theta=0$ given in \eqref{dotthe}. As seen in \eqref{regdyn}, the new LRE has some free signals $u_i$ that we selected in a particular way in Proposition \ref{pro3} to enforce the excitation-like condition of {\bf F1}. However, further research is needed to select other signals $u_i$ that would ensure {\em bona fide} excitation for a well-defined class of regressors $\Delta$---for instance, IE. It would be particularly interesting to relax the standing assumption \eqref{conalpdel}---that imposes constraints on the tuning function $\alpha$.

 A very simple extension of the result is obtained combining the elements of the signal $\caly$ in \eqref{caly} as
$$
Y:=\calh_1 (\caly_1)+\calh_2 (\caly_2),
$$
with $\calh_1$ and $\calh_2$ BIBO stable, scalar, LTV operators, to generate a new LRE
$$
Y=[\calh_1 (\Delta \Phi_{11}) + \calh_2( \Phi_{21})]\theta.
$$
A topic of future research is the selection of suitable operators $\calh_i$ to ensure excitation of the new regressor and boundedness of all signals. Note also that the extension of the results---including the application of the DREM procedure---to the case of nonlinear, {\em separable} parameterizations is straightforward.

Our LRE generator is restricted to the case of scalar LREs. However, mimicking the derivations of Section \ref{sec3} it is possible to generate a new LRE also for the vector case, that is, when the original LRE is of the form \eqref{veclre}. Indeed, using GPEBO it is possible to construct a measurable signal $e(t) \in \rea$ and a {\em new regressor} $\mu(t) \in \rea^q$, such that the new vector LRE
$$
e  = \mu^\top \theta
$$
holds. In this case  $\mu$ consists of the first $q$ elements of the last row of  the $(q+1) \times (q+1)$  matrix solution of the differential equation
$$
\dot\Phi=
\begmat{0_{q \times q} & u_1\\ u_2 \Delta^\top & u_3}\Phi,\;\Phi(0)=I_{q+1},
$$
with {\em arbitrary} functions $u_1(t) \in \rea^q$, $u_2(t) \in \rea$ and $u_3(t) \in \rea$. The problem is that selecting  the signals $u_i$ to stabilize this new LTV system seems a daunting task. In any case, given the availability of DREM, the interest of such a study is highly questionable.

Another interesting line of research that we are pursuing now is to apply, to the general problem of state observation, the idea of {\em virtual signal injection}---that combined with GPEBO---is exploited in this paper. Indeed, the key step in the construction of our LRE generator is captured in \eqref{dotthe}, that may be interpreted as a virtual signal injection. Some encouraging preliminary results have been developed and we expect to report them in the near future.

%


\begin{thebibliography}{aaaaa}

\bibitem{araetalijacsp19}
S. Aranovskiy, A. Belov, R. Ortega, N. Barabanov and A, Bobtsov,  Parameter identification of linear time-invariant systems using dynamic regressor extension and mixing. {\it International Journal of Adaptive Control and Signal Processing}, vol. 33, no. 6, pp. 1016--1030, 2019.

\bibitem{ARAetal}
S. Aranovskiy, A. Bobtsov, R. Ortega and A. Pyrkin, Performance enhancement of parameter estimators via dynamic regressor extension and mixing,  \TAC, vol. 62, pp. 3546--3550, 2017. (See also {\tt arXiv:1509.02763} for an extended version.)

\bibitem{BARORT}
N. E. Barabanov and R. Ortega, On global asymptotic stability of $\dot x = \phi(t)\phi^\top (t)x$ with $\phi(t)$ bounded and not persistently exciting, {\it Systems \& Control Letters}, vol. 109, pp. 24--27, 2017.

\bibitem{CHOSHI}
N. Cho, H. Shin, Y. Kim and A. Tsourdos, Composite model reference adaptive control with parameter convergence under finite excitation, \TAC, vol. 63, pp. 811--818, 2018.

\bibitem{CHOetal}
G. Chowdhary, T. Yucelen, M. Muhlegg and E. N. Johnson, Concurrent learning adaptive control of linear systems with exponentially convergent bounds, {\em International Journal of Adaptive Control and Signal Processing}, vol. 27, no. 4, pp. 280--301, 2013.

\bibitem{CUIGAUANN}
Y. Cui, J. Gaudio and A. Annaswamy, A new algorithm for discrete-time parameter estimation, {\em ArXiv Preprint}, 2021. ({\tt arXiv:2103.16653})

\bibitem{IOASUNbook}
P. Ioannou and J. Sun, {\em Robust Adaptive Control}, Prentice-Hall, NJ, 1996.

\bibitem{KRERIE}
G. Kreisselmeier and G. Rietze-Augst, Richness and excitation on an interval---with application to continuous-time adaptive control, \TAC, vol. 35, no. 2, pp. 165--171, 1990.

\bibitem{LJUbook}
L. Ljung, {\em System Identification: Theory for the User}, Prentice Hall, New Jersey, 1987.

\bibitem{NARANNbook}
K. Narendra and A. Annaswamy, {\em Stable Adaptive Systems}, Prentice Hall, New Jersey, 1989.

\bibitem{ORTetalscl}
 R. Ortega, A. Bobtsov, A. Pyrkin and A. Aranovskyi: A parameter estimation approach to  state observation of nonlinear systems, {{\it Systems \& Control Letters}}, vol. 85, pp 84--94, 2015.

\bibitem{ORTetalaut}
R. Ortega, A. Bobtsov, N. Nikolaev, J. Schiffer and D. Dochain, Generalized parameter estimation-based observers: Application to power systems and chemical-biological reactors, \AUT, vol. 129, 109635, 2021.

\bibitem{ORTetaltac}
R. Ortega, S. Aranovskiy, A. Pyrkin, A Astolfi and A. Bobtsov, New results on parameter estimation via dynamic regressor extension and mixing: Continuous and discrete-time cases, \TAC, vol. 66, no. 5, pp. 2265--2272, 2021.

\bibitem{ORTNIKGER}
R. Ortega, V. Nikiforov and D. Gerasimov, On modified parameter estimators for identification and adaptive control: A unified framework and some new schemes, {\em IFAC Annual Reviews in Control}, vol. 50, pp. 278--293, 2020.

\bibitem{PANYU}
Y. Pan and H. Yu, Composite learning robot control with guaranteed parameter convergence, {\em Automatica}, vol. 89, pp. 398--406, 2018.

\bibitem{PRA}
L. Praly, Convergence of the gradient algorithm for linear regression models in the continuous and discrete-time cases, {\em Int. Rep. MINES ParisTech}, Centre Automatique et Syst\`emes, December 26, 2017.

\bibitem{ROYetal}
S.B. Roy, S. Bhasin and I.N. Kar, Combined MRAC for unknown MIMO LTI systems with parameter convergence, \TAC, vol. 63, pp. 283--290, 2018.

\bibitem{SASBODbook}
S. Sastry and M. Bodson, {\em Adaptive Control: Stability, Convergence and Robustness}, Prentice-Hall, New Jersey, 1989.

\bibitem{TAObook}
G. Tao, Adaptive control design and analysis. vol. 37. John Wiley \& Sons, New Jersey, 2003.

\bibitem{YIetal}
B. Yi, R. Ortega, D. Wu and W. Zhang, Orbital stabilization of nonlinear systems via Mexican sombrero energy pumping-and-damping injection, \AUT, vol. 112, 108861, 2020.

	
\end{thebibliography}
\end{document}